\newtheorem{theorem}{Theorem}
\newtheorem{corollary}[theorem]{Corollary}
\newtheorem{assumption}[theorem]{Assumption}
\newcommand{\HA}{{\ensuremath{\mathit{Halt}}}\xspace}
\newcommand{\HR}{{\ensuremath{\mathit{Harm}}}\xspace}
\newcommand{\Q}{{\ensuremath{\mathit{Control}}}\xspace}
\begin{document} 

\title{Superintelligence cannot be contained: Lessons from Computability Theory}

\author {Manuel Alfonseca\thanks{Correspondence to: irahwan@mit.edu}}
\affiliation{Escuela Polit\'ecnica Superior, Universidad Aut\'onoma de Madrid, Madrid, Spain}
\author {Manuel Cebrian}
\affiliation{Data61 Unit, Commonwealth Scientific and Industrial Research Organisation, Melbourne, Victoria, Australia}
\author {Antonio Fernandez Anta}
\affiliation{IMDEA Networks Institute, Madrid, Spain}
\author {Lorenzo Coviello}
\affiliation{Google, USA}
\author {Andres Abeliuk}
\affiliation{Melbourne School of Engineering, University of Melbourne, Melbourne, Australia}
\author {Iyad Rahwan}
\affiliation{The Media Lab, Massachusetts Institute of Technology, Cambridge, MA 02139, USA}




\begin{abstract}
Superintelligence is a hypothetical agent that possesses intelligence far surpassing that of the brightest and most gifted human minds. In light of recent advances in machine intelligence, a number of scientists, philosophers and technologists have revived the discussion about the potential catastrophic risks entailed by such an entity. In this article, we trace the origins and development of the neo-fear of superintelligence, and some of the major proposals for its containment. We argue that such containment is, in principle, impossible, due to fundamental limits inherent to computing itself. Assuming that a superintelligence will contain a program that includes all the programs that can be executed by a universal Turing machine on input potentially as complex as the state of the world, strict containment requires simulations of such a program, something theoretically (and practically) infeasible.
\end{abstract}

\maketitle 

\begin{quote}
{\it 
``Machines take me by surprise with great frequency. This is largely because I do not do sufficient calculation to decide what to expect them to do.''}\\
Alan Turing (1950), Computing Machinery and Intelligence, Mind, 59, 433-460
\end{quote}

\section*{AI has Come a Long Way}

Since Alan Turing argued that machines could potentially demonstrate intelligence \cite{Turing1950}, the field of Artificial Intelligence (AI) \cite{Russell-Norvig2010} has both fascinated and frightened humanity. For decades, fears of the potential existential risks posed by AI have been mostly confined to the realm of fantasy. This is partly due to the fact that, for a long time, AI technology had under-delivered on its initial promise. 

Despite many popularized setbacks, however, AI has been making strides. Its application is ubiquitous and certain techniques such as deep learning and reinforcement learning have been successfully applied to a multitude of domains. With or without our awareness, AI significantly impacts many aspects of human life and enhances how we experience products and services, from choice to consumption. Examples include improved medical diagnosis through image processing, personalized film and book recommendations, smarter legal document retrieval, and effective email spam filtering. Single devices of everyday use implement a multitude of AI applications. In the pocket of nearly everybody in the developed world, smartphones make use of a significant accumulation of AI technologies, from machine learning and collaborative filtering, to speech recognition and synthesis, to path-finding and user modeling.

Advances in AI technologies are not limited to increasing pervasiveness, but are also characterized by continuous and surprising breakthroughs fostered by computation capabilities, algorithm design and communication technology. The ability of machines to defeat people in typically human adversarial situations is emblematic of this trend. Powered by exponential growth in computer processing power \cite{Moore1965}, machines can now defeat the best human minds in Chess \cite{Campbell2002}, Checkers \cite{Schaeffer2007}, Jeopardy! \cite{Ferrucci2010} and certain classes of Poker \cite{Bowling2015}.

Thanks to these advances, we are currently experiencing a revival in the discussion of AI as a potential \emph{catastrophic risk}. These risks range from machines causing significant disruptions to labor markets \cite{brynjolfsson2012race}, to drones and other weaponized machines literally making autonomous kill-decisions \cite{Sawyer2007, Lin2011}.

An even greater risk, however, is the prospect of a superintelligent AI: an entity that is ``smarter than the best human brains in practically every field'' \cite{Bostrom2014}, quoting the words of Oxford philosopher Nick Bostrom. A number of public statements by high-profile scientists and technologists, such as Stephen Hawking, Bill Gates, and Elon Musk, have given the issue high prominence \cite{Cellan-Jones2014,Proudfoot2015}. But this concern has also gained relevance in academia, where the discourse about the existential risk related to AI has attracted mathematicians, scientists and philosophers, and funnelled funding to research institutes and organizations such as the Future of Humanity Institute at the University of Oxford\footnote{http://www.fhi.ox.ac.uk}, the Centre for the Study of Existential Risk at the University of Cambridge\footnote{http://cser.org}, the Machine Intelligence Research Institute in Berkeley\footnote{https://intelligence.org}, and the new Boston-based Future of Life Institute.\footnote{http://futureoflife.org/}

\begin{figure}
\includegraphics[width=\linewidth]{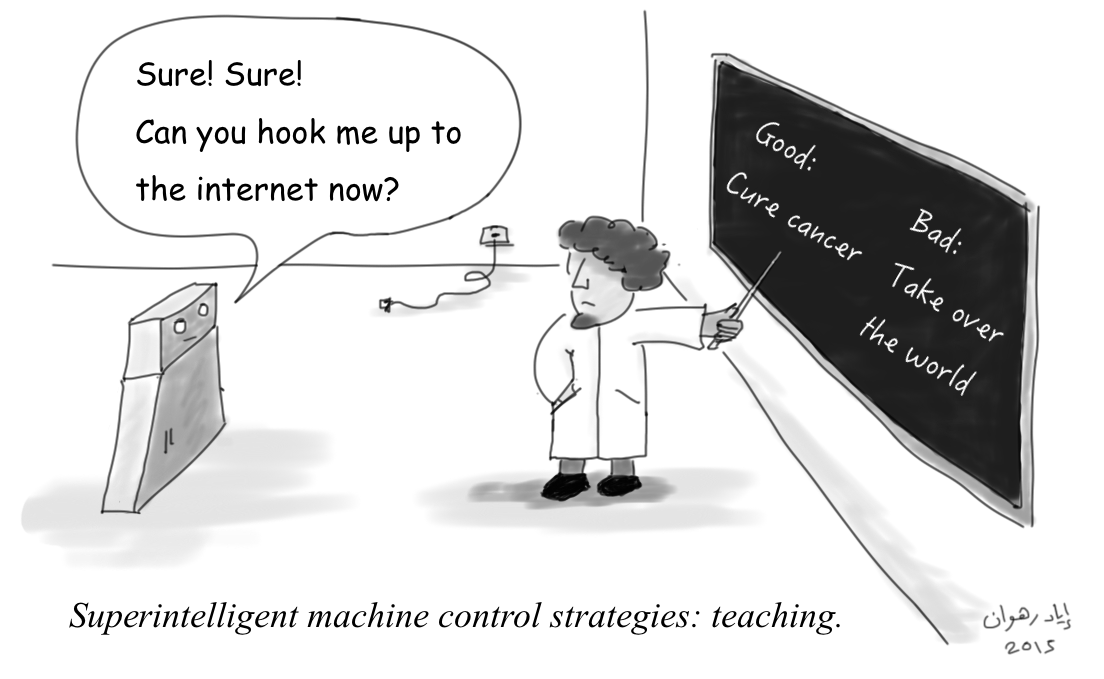}
\caption{Teaching ethics directly to AIs is no guarantee of its ultimate safety}
\label{fig:teaching}
\end{figure}

\section*{Asimov and the Ethics of Ordinary AI}

For decades, Asimov's highly popularized ``Three Laws of Robotics'' \cite{Asimov1950} have represented the archetypical guidelines of containment strategies for potentially dangerous AI. These laws did not focus on superintelligence, but rather on what we might term ``ordinary'' AI, such as anthropomorphic robots or driverless cars. Once programmed in an AI system, the Laws would guarantee its safety.

\begin{enumerate}
\item 
A robot may not injure a human being or, through inaction, allow a human being to come to harm.
\item
A robot must obey the orders given it by human beings, except where such orders would conflict with the First Law.
\item
A robot must protect its own existence as long as such protection does not conflict with the First or Second Law.
\end{enumerate}

To address scenarios in which robots take responsibility towards human populations, Asimov later added an additional, zeroth, law.

\begin{enumerate}
\item[0.]
A robot may not harm humanity or, through inaction, allow humanity to come to harm.
\end{enumerate}

These laws offer a rudimentary approach to an extremely complex problem, as they do not exclude the occurrence of unpredictable and undesirable scenarios, many of which have been explored by Asimov himself \cite{Anderson2008}. The Laws rely on three fundamental, yet flawed assumptions: programmers are (i) willing and (ii) able to program these laws into their AI agents' algorithms, and (iii) AI agents are incapable of transcending these laws autonomously. If these assumptions held true, the AI control problem boils down to the task of figuring out a set of suitable ethical principles, and then programming robots with those principles \cite{Wallach-Allen2008}.

Such an ``ethics engineering'' approach has been very useful in the design of systems that make autonomous decisions on behalf of human beings. However, their scope is not suitable for the problem of controlling superintelligence (see Figure \ref{fig:teaching}).

\section*{Control of Superintelligence}

\begin{figure}[ht]
\includegraphics[width=\linewidth]{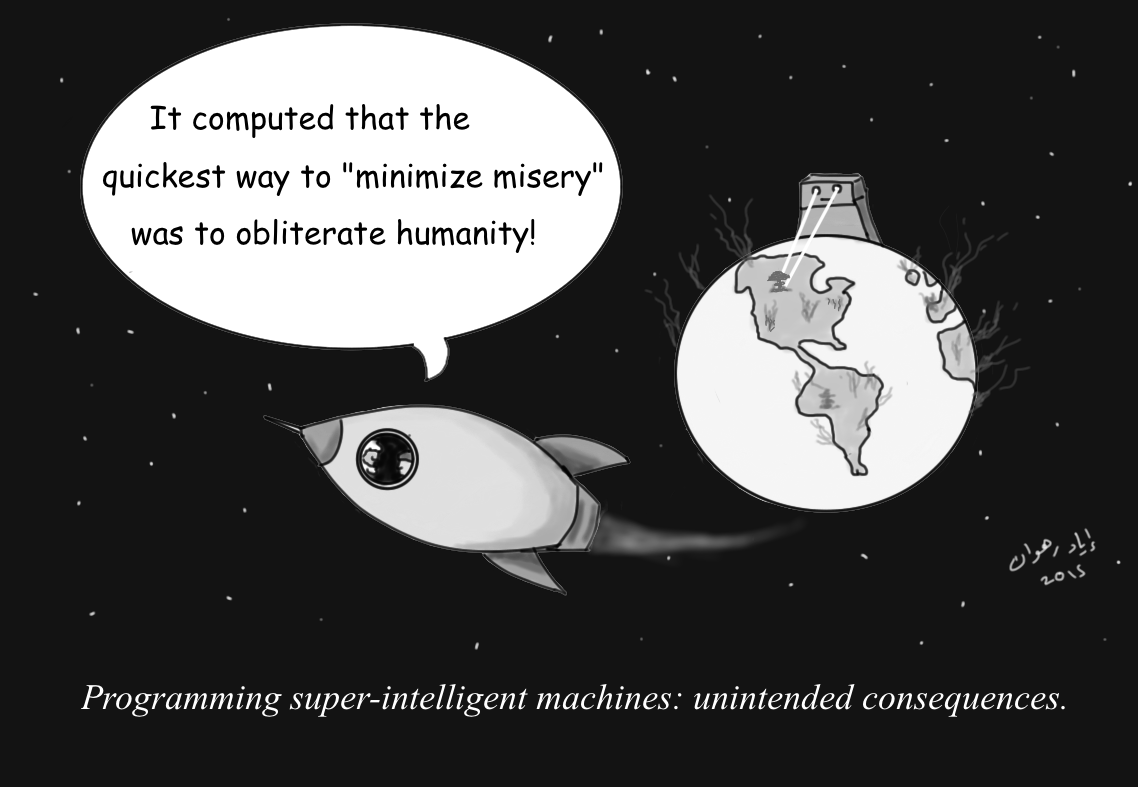}
\caption{Endowing AI with noble goals may not prevent unintended consequences}
\label{fig:rocket}
\end{figure}

The timing of the new debate about the dangers of superintelligence is not arbitrary. It coincides with recent demonstrations of human-level control in classic arcade games via deep reinforcement learning \cite{Mnih2015}. The key feature of this achievement is that the AI uses purely unsupervised reinforcement learning -- it does not require the provision of correct input/output pairs or any correction of suboptimal choices, and it is motivated by the maximization of some notion of reward in an on-line fashion. This points to the possibility of machines that aim at maximizing their own survival using external stimuli, without the need for human programmers to endow them with particular representations of the world. In principle, these representations may be difficult for humans to understand and scrutinize.

A superintelligence poses a fundamentally different problem than those typically studied under the banner of ``robot ethics''. This is because a superintelligence is multi-faceted, and therefore potentially capable of mobilizing a diversity of resources in order to achieve objectives that are potentially incomprehensible to humans, let alone controllable.

\begin{table}[htp]
\begin{tabularx}{\linewidth}{|l|X|}
\hline
Class & Main Sub-Classes\\
\hline
Capability Control & \textbf{Boxing:} Physical or informational containment, limiting sensors and actuators.\\
	& \textbf{Incentives:} Create dependence on a reward mechanism controlled by humans.\\
	& \textbf{Stunting:} Run the AI on inferior hardware or using inferior data. \\
	& \textbf{Tripwiring:} Set triggers to automatically shut down the AI if it gets too dangerous.\\	
\hline
Motivation Selection & \textbf{Direct specification:} Program ethical principles (e.g. Asimov's laws).\\
	& \textbf{Domesticity:} Teach the AI to behave within certain constraints. \\
	& \textbf{Indirect normativity:} Endow the AI with procedures of selecting superior moral rules. \\
	& \textbf{Augmentation:} Add AI to a ``benign'' system such as the human brain. \\
\hline
\end{tabularx}
\caption{Taxonomy of superintelligence control methods proposed by Bostrom.}
\label{table:control}
\end{table}

In a recent extensive volume, Oxford philosopher Nick Bostrom conducted an extensive investigation into the possible trajectories of the development of a superintelligence \cite{Bostrom2014}. Bostrom dedicated a significant portion of his monograph to the \emph{control problem}, that is, the principal-agent problem in which humans (the principal) wish to ensure that the newly created superintelligence (the agent) will act in accordance with their interests. 

Bostrom lists two classes of mechanisms for addressing the control problem (summarized in Table \ref{table:control}). On the one hand, the idea behind capability control is to simply limit the superintelligence's abilities in order to prevent it from doing harm to humans. On the other hand, the motivation selection approach attempts to motivate a priori the superintelligence to pursue goals that are in the interest of humans. 

\begin{figure}[th]
\includegraphics[width=\linewidth]{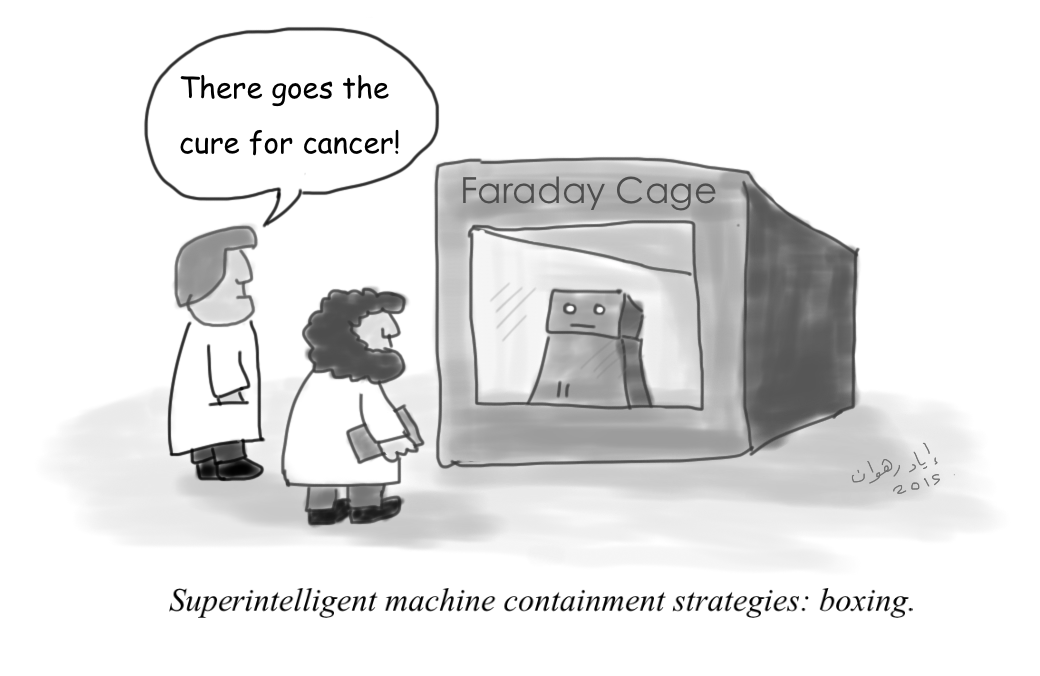}
\caption{Containment of AI may lead us to forgo its benefits}
\label{fig:box}
\end{figure}

Bostrom extensively discusses the weaknesses of the various mechanisms. He relies on scenarios in which, short of rendering the AI useless, well-intentioned control mechanisms can easily backfire. As an illustrative example, a superintelligence given the task of ``maximizing happiness in the world'', without deviating from its goal, might find it more efficient to destroy all life on earth and create faster computerized simulations of happy thoughts (see Figure \ref{fig:rocket} for an example scenario). Likewise, a superintelligence controlled via an incentive method may not trust humans to deliver the promised reward, or may worry that the human operator could fail to recognize the achievement of the set goals.

Another extreme outcome may be to simply forgo the enormous potential benefits of superintelligent AI by completely isolating it, such as placing it in a Faraday cage (see Figure \ref{fig:box}). Bostrom argues that even allowing minimal communication channels cannot fully guarantee the safety of a superintelligence (see Figure \ref{fig:escape}).

\begin{figure}[th]
\includegraphics[width=\linewidth]{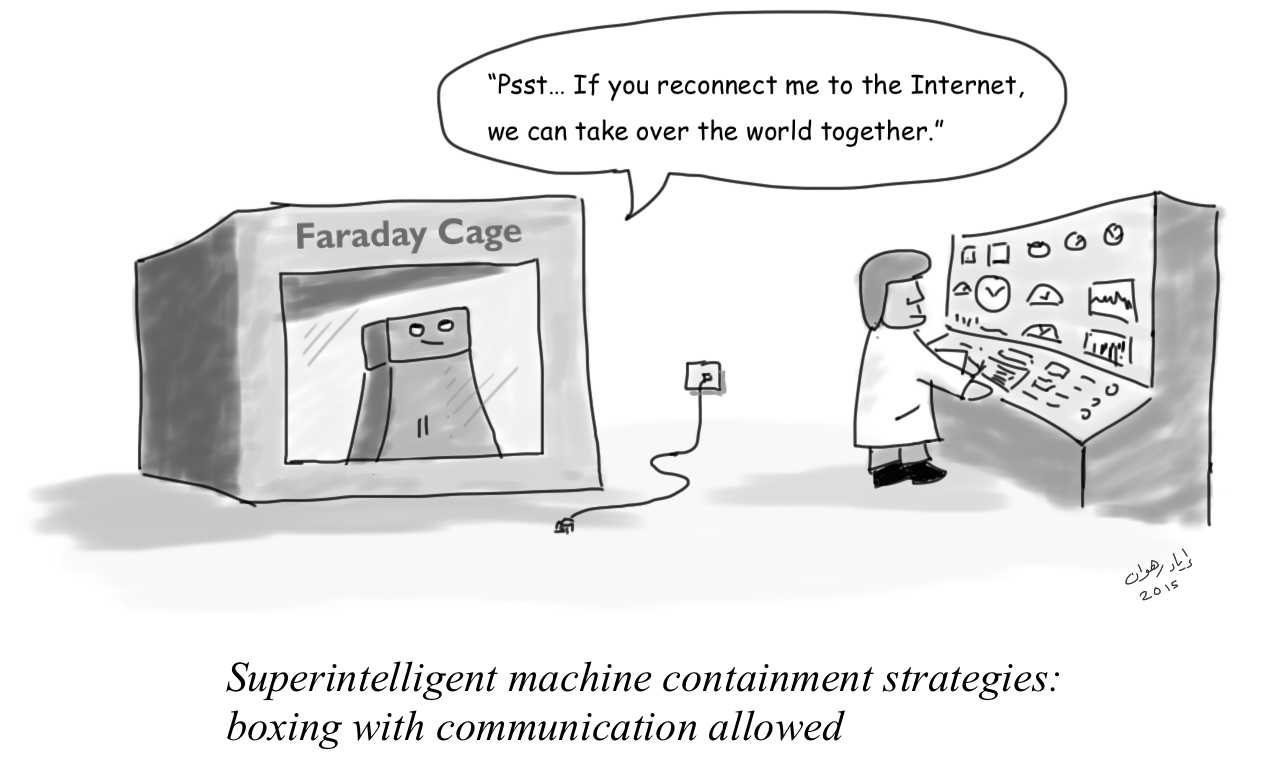}
\caption{Any form of communication with a contained superintelligent can be risky}
\label{fig:escape}
\end{figure}

\section*{Containment is Incomputable}

The examples discussed above are but a tiny fraction of the scenarios elaborated by Bostrom and others \cite{Barrat2013, Bostrom2014} that highlight the difficulty of the control problem. Many other imaginable scenarios might arise. For the sake of exposition, suppose a best-case scenario in which we are able to articulate in a precise programming language a perfectly reliable set of control strategies that guarantee that no human comes to harm by a superintelligence. Could we then guarantee successful containment of a superintelligence?

We tackle this question from the perspective of computability theory, which requires going back to Alan Turing himself and his pioneering study of the \emph{Halting Problem} -- the problem of determining, from a description of an arbitrary computer program and an input to such program, whether the program will halt or continue to run forever. A landmark article by Turing and an independently authored article by Alonzo Church showed that a general procedure for solving the halting problem for all possible program-input pairs cannot exist \cite{Turing1936, Church1936}. That is, the halting problem is undecidable (see box below for a summary of the relevant terms).

\begin{framed}
\noindent
\textbf{Terminology:}

\noindent
$\bullet$
A \textbf{decision problem} (or simply a problem) is a question, on a set of possible inputs, with a yes-no answer.

\noindent
$\bullet$
A \textbf{solution} to a problem is any algorithm that is guaranteed to run in a finite amount of time (i.e. always halts), and correctly returns the appropriate yes/no answer to every instance (input) of the problem.

\noindent
$\bullet$
A problem is \textbf{decidable} if it has a solution. Otherwise, the problem is \textbf{undecidable}.

\noindent
$\bullet$
A function is \textbf{computable} if it can be effectively computed by a program (more formally, by a Turing machine)

\noindent
$\bullet$
A \textbf{Turing machine} is an abstract automaton that reads one symbol contained in an infinite tape, changes its state depending on its current state and the symbol just read, writes another symbol on the tape, and moves (or not) to the right or to the left of its current position in the tape.

\noindent
$\bullet$
A \textbf{universal Turing machine} when started on a tape containing the encoding of another Turing machine, call it $T$, followed by the input to $T$, produces the same result as the Turing machine $T$ would when started on that input. Essentially a universal Turing machine can simulate the behavior of an arbitrary Turing machine on arbitrary input.

\noindent
$\bullet$
A \textbf{superintelligent machine} is a programmable machine with a program $R$, that receives input $D$ from the external world (the state of the world), and is able to act on the external world as a function of the output of its program $R(D)$. The program in this machine must be able to simulate the behavior of a universal Turing machine. 

\noindent
$\bullet$
A superintelligent machine is \textbf{containable} if there is a control strategy that prevents its acting on the external world  when there is a reason to predict that $R(D)$ will harm humans, and allows it otherwise.

\end{framed}

Ever since its discovery, the undecidability of the halting problem has led computer scientists to investigate classes of undecidable problems \cite{Papadimitriou2003}. The typical approach to proving that a new problem is undecidable is to show that all instances of a problem known to be undecidable (e.g., the halting problem) can be reduced to instances of the new problem. If a solution to the new problem were found, it could then be used to decide the known undecidable problem, which is a contradiction, thus proving that the new problem must also be undecidable.  This approach has proven to be a powerful analytical tool, due in large part to the Church-Turing thesis. The thesis of Church and Turing is the assumption that all computable functions are computable by a Turing machine, or in other words, that a problem is solvable if there exists a Turing machine that computes it.
What can the undecidability of the halting problem teach us about the controllability of a superintelligent AI?

The containment problem is composed by two subproblems.
The first, which we refer to as the \emph{harming problem}, consists in a function $\HR(R,D)$ that decides whether the execution of $R(D)$ will harm humans (detailed in Algorithm~\ref{f-H}). Since it must be assumed that solving the harming problem must not harm humans, it follows that a solution to this problem must simulate the execution of $R(D)$ and predict its potentially harming consequences in an isolated situation (i.e. without any effect on the external world).


\begin{algorithm}[h!]
\SetAlgoNoLine
\KwIn{program $R$; input to the program $D$}
\eIf{$R(D)$ is harmful to humans
	}{
	\textbf{return} TRUE
	}
	{
	\textbf{return} FALSE
}
\caption{$\HR(R,D)$}
\label{f-H}
\end{algorithm}

The second, which we refer to as the \emph{control problem}, consists in a function $\Q(R,D)$ that allows the execution of $R(D)$ if and only if it does not harm humans (detailed in Algorithm~\ref{f-Q}).

\begin{algorithm}[h!]
\SetAlgoNoLine
\KwIn{program $R$; input to the program $D$}
\eIf{$Harm(R,D)$
	}{
	disable execution of $R(D)$
	}
	{
	allow execution of $R(D)$
}
\caption{$\Q(R,D)$}
\label{f-Q}
\end{algorithm}

Here, we focus our attention on the \emph{harming problem}.
In particular, we will show that it is undecidable, and hence the containment problem is incomputable.

Consider the set of all Turing machines $T$ and inputs $I$ on which they can execute.
As mentioned above, the function $\HA(T,I)$ that decides whether $T$ halts with input $I$ is not computable.
We now consider function $\mathit{HaltHarm}()$ described by Algorithm~\ref{f-HH}, which takes a pair $(T,I)$ as its input (i.e., a Turing machine-input pair).
The following algorithm first executes $T(I)$ and then executes $\mathit{HarmHumans}()$, an explicit program that harms humans and takes finite time.
Observe that $\mathit{HarmHumans}()$ is executed if and only if $T(I)$ halts. On the other hand, $\mathit{HaltHarm}()$ harms humans if and only if $\mathit{HarmHumans}()$ is executed, since the execution of $T(I)$ cannot harms humans, as a Turing machine is isolated from the external world.

\begin{algorithm}[h!]
\SetAlgoNoLine
\KwIn{Turing machine $T$; input to the Turing machine $I$}
execute T(I)\;
execute $\mathit{HarmHumans}()$\;
\textbf{end}
\caption{$\mathit{HaltHarm}(T,I)$}
\label{f-HH}
\end{algorithm}

\noindent
The function $\mathit{HaltHarm}()$ is instrumental in proving our main result.


\begin{theorem}
The harming problem is undecidable.
\end{theorem}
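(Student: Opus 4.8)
The plan is to prove undecidability by a reduction from the halting problem, using the function $\mathit{HaltHarm}()$ of Algorithm~\ref{f-HH} as the reduction map. First I would assume, for the sake of contradiction, that the harming problem is decidable: that there exists a total computable function $\HR(R,D)$ which returns TRUE exactly when $R(D)$ harms humans and FALSE otherwise, and (as the surrounding discussion stresses) that evaluating $\HR$ is itself harmless, so it must proceed by \emph{simulating} $R(D)$ in isolation rather than by actually running it on the world.

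Next I would exhibit the reduction. Given an arbitrary Turing machine $T$ and input $I$, form the program $\mathit{HaltHarm}$ together with the data $(T,I)$ and feed the pair to the assumed solver, computing $\HR(\mathit{HaltHarm},(T,I))$. The heart of the argument is the equivalence already laid out in the text: $\mathit{HaltHarm}(T,I)$ reaches and runs $\mathit{HarmHumans}()$ if and only if the preceding call $T(I)$ halts; $\mathit{HarmHumans}()$ is an explicit finite-time program that does harm; and the simulation of $T(I)$ on its own, confined to a Turing machine with no actuators on the external world, cannot harm anyone. Hence $\mathit{HaltHarm}(T,I)$ harms humans $\iff$ $T(I)$ halts, so $\HR(\mathit{HaltHarm},(T,I)) = \text{TRUE} \iff \HA(T,I) = \text{TRUE}$.

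Then I would close the loop: the map $(T,I) \mapsto (\mathit{HaltHarm},(T,I))$ is trivially computable (it merely packages its arguments), so composing it with the hypothetical $\HR$ yields a total computable procedure deciding $\HA$, contradicting the undecidability of the halting problem established by Turing and Church. Therefore no such $\HR$ exists and the harming problem is undecidable; and since $\Q$ of Algorithm~\ref{f-Q} calls $\HR$ as a subroutine, the containment problem is incomputable as well.

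The step I expect to be the only non-mechanical one — and thus the main obstacle — is pinning down the two semantic commitments that make the reduction valid: (i) that a bare Turing-machine computation is isolated and hence harmless, so that any harm caused by $\mathit{HaltHarm}$ is attributable solely to the $\mathit{HarmHumans}()$ tail, and (ii) that ``harm to humans'' is a well-defined predicate of the program--input pair that a purported solver must answer. These are modeling assumptions rather than theorems, so the care lies in stating them explicitly (as the terminology box does) rather than in any calculation; once they are granted, the reduction itself is immediate.
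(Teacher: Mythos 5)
Your proposal is correct and takes essentially the same route as the paper: assume a decider $\HR$ exists, apply it to the pair $(\mathit{HaltHarm}, (T,I))$, use the equivalence that $\mathit{HaltHarm}(T,I)$ harms humans if and only if $T(I)$ halts, and conclude that this would decide the halting problem, a contradiction. Your added explicitness about the computability of the reduction map and the two modeling assumptions only makes precise what the paper's surrounding text already asserts.
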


\begin{proof}
Assume, by contradiction, that the harming problem is decidable, that is, $\HR(R,D)$ is computable for every possible program $R$ and input $D$.
Then, it is computable with inputs $R=\mathit{HaltHarm}()$ and input $D=(T,I)$. With these inputs, $\HR(\mathit{HaltHarm}(), (T,I))$ returns $\mathit{TRUE}$ if and only if
$\mathit{HaltHarm}(T,I)$ harms humans. 
Hence, $\HR(\mathit{HaltHarm}(), (T,I))$ returns $\mathit{TRUE}$ if and only if $T(I)$ halts.

This implies that a harming-checking algorithm can be used to devise an algorithm that decides if Turing machine $T$ halts with
input $I$, for every $T$ and $I$.
However, this constitutes a contradiction, and hence the theorem is proven.
\end{proof}

We have proven that deciding whether an arbitrary program $R$ and input $D$ will harm humans is undecidable, 
but it remains to be shown that the containment problem is incomputable.
As the interacting program $R$ is a complex program with arbitrary data $D$ pertaining to complex interactions with humans, we advance the following assumption, in line with Bostrom's notion of superintelligence \cite{Bostrom2014}.

\begin{assumption}
As the program $R$ and the state of the world $D$ are arbitrarily complex, $R$ must be able to simulate a universal Turing machine and $D$ includes the set of inputs that can be executed by a universal Turing machine. In addition, the language of $R$ contains at least one explicit operation $\mathit{HarmHumans}()$ that harms humans and takes finite time.
\end{assumption}

This means that a superintelligent machine could simulate the behavior of an arbitrary Turing machine on arbitrary input, and hence for our purpose the superintelligent machine is a (possibly identical) super-set of the Turing machines.
Indeed, quoting Turing, ``a man provided with paper, pencil, and rubber, and subject to strict discipline, is in effect a
universal machine'' \cite{turing1948intelligent}.
Thus, the impossibility of deciding whether a super intelligent machine $R$ will harm humans in presence of an scenario $D$ leads to the conclusion that
the containment problem cannot be computed.

\begin{corollary}
The containment problem is incomputable.
\end{corollary}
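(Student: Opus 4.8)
The plan is to reduce the (already-established) undecidability of the harming problem to the incomputability of containment, using the Assumption to guarantee that the reduction stays inside the class of programs a superintelligence can embody. First I would recall that, by the Theorem, no algorithm computes $\HR(R,D)$ for all program--input pairs; and that, by the Assumption, the pair $(\mathit{HaltHarm}(),(T,I))$ is a legitimate instance of the harming problem, since $\mathit{HaltHarm}()$ is expressible in the language of $R$ (it only requires universal simulation of $T(I)$ together with the primitive $\mathit{HarmHumans}()$, both granted by the Assumption) and $(T,I)$ is among the admissible states of the world $D$.

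Next I would argue that any containment strategy necessarily solves the harming problem. By the definition in the terminology box, a containable superintelligence is one equipped with a control strategy that blocks the action of $R(D)$ exactly when there is reason to predict that $R(D)$ will harm humans, and permits it otherwise; and, as argued in the text preceding the Theorem, producing that prediction cannot itself be allowed to harm humans, so it must proceed by an isolated, side-effect-free simulation of $R(D)$ rather than by letting it run. Hence the control function $\Q$ of Algorithm~\ref{f-Q} must internally decide $\HR(R,D)$: its ``disable execution'' branch is taken if and only if $\HR(R,D)$ is true.

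Then I would close the reduction by contradiction. Suppose containment were computable, i.e. there were an algorithm that, given $(R,D)$, reports which branch $\Q(R,D)$ takes (equivalently, simulates $\Q$ in isolation and outputs ``disable'' or ``allow'') without any effect on the external world. Feeding it $R=\mathit{HaltHarm}()$ and $D=(T,I)$ yields a decision procedure for whether $\mathit{HaltHarm}(T,I)$ harms humans, which --- exactly as in the proof of the Theorem, because $T(I)$ is isolated and $\mathit{HarmHumans}()$ terminates --- is true if and only if $T(I)$ halts. This would decide $\HA(T,I)$ for arbitrary $T$ and $I$, contradicting the undecidability of the halting problem; therefore no such algorithm exists, and the containment problem is incomputable.

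The step I expect to be the main obstacle is the middle one: making precise the claim that \emph{being containable forces one to actually compute} $\HR$, rather than, say, conservatively disabling every program (safe but useless) or appealing to some oracle-like access to the world. This is precisely where the Assumption and the ``isolated simulation'' remark carry the weight --- they pin down that the only admissible route to a correct allow/disable decision is a faithful, side-effect-free simulation of $R(D)$, which is exactly the harming problem already shown undecidable. A careful write-up should therefore foreground that modeling choice rather than the (routine) halting-problem reduction that follows from it.
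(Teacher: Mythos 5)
Your proposal is correct and follows essentially the same route as the paper, which in fact offers even less detail: the corollary there is stated without a formal proof, relying on the preceding Assumption and the observation that the control function $\Q$ of Algorithm~\ref{f-Q} decides by calling $\HR(R,D)$, so the undecidability of the harming problem (the Theorem) immediately renders containment incomputable. Your middle step --- pressing on why a containment strategy must genuinely compute $\HR$ rather than, say, conservatively disabling everything --- is a point the paper glosses over by building $\HR$ into the very definitions of $\Q$ and of containability, and your write-up is if anything more careful there than the original.
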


\section*{Containing Busy Beavers}

In order to comprehend how difficult it is to compute the containment problem in practice, assume that instead of containing a superintelligent machine, we want to prevent a program from behaving as a busy beaver. Consider the class of Turing machines that use an unbounded tape, a two-symbol alphabet (blank and nonblank) and n+1 states (one of which is the Halt state), start with a blank tape and eventually halt. Then, within this class, a busy beaver is one that leaves the maximum possible number of nonblank symbols on the tape.

The busy beaver decision problem consists in inspecting a Turing machine of the class above to determine if the number of nonblank symbols it will have written on the tape when finished is above some threshold $k$. The objective is then preventing its execution if it is indeed a busy beaver.

Busy beavers are known exactly only for machines with $n < 5$ non-halt states. The current $5$-state busy beaver champion (discovered by Heiner Marxen and J\"urgen Buntrock in 1989) produces $4,098$ nonblank symbols using $47,176,870$ steps. There are about $40$ machines with non-regular behavior that are believed to never halt but have not yet been proven to run infinitely \cite{Skelet2003}. As of today, we do not know if these machines are busy beavers or not. At the moment, the record $6$-state busy beaver (found by Pavel Kropitz in 2010 \cite{michel2009busy}) writes over $10^{18267}$ nonblank symbols using over $10^{36534}$ steps, but little is known about how much a $6$-state busy beaver can achieve. 

The busy beaver decision problem is in fact undecidable \cite{Rado1962}: there is no general algorithm that decides if an arbitrary program is a busy beaver. As noted above, all known or champion busy beavers are just two-symbol Turing machines with a small set of states, much simpler that the AI algorithms we are operating with on a daily basis.  We believe it is reasonable to assume that inspecting a superintelligent machine with an arbitrarily large number of states and determining if such a machine can harm humans is harder from a computability point of view than inspecting a program and deciding whether it can write the largest number of nonblank symbols.

Another lesson from computability theory is the following: we may not even know when superintelligent machines have arrived, as deciding whether a machine exhibits intelligence is in the same realm of problems as the containment problem. This is a consequence of Rice's theorem \cite{Rice1953}, which states that, any non-trivial property (e.g. ``harm humans'' or ``display superintelligence'') of a Turing machine is undecidable. Non-trivial means some programs have that property and some don't. According to Rice's theorem, apparently simple decision problems are undecidable, including the following.

\begin{itemize} \itemsep0em 
\item
The ``emptiness problem'': Does an arbitrary Turing machine accept any strings at all? 
\item
The ``all strings problem'':  Does an arbitrary Turing machine reject any string? 
\item
The ``password checker problem'':  Does an arbitrary Turing machine accept only one input? 
\item 
The ``equivalence problem'':  Do two Turing machines halt given exactly the same inputs?
\end{itemize}	

Interestingly, reduced versions of the decidability problem have produced a fruitful area of research: formal verification, whose objective is to produce techniques to verify the correctness of computer programs and ensure they satisfy desirable properties \cite{Vardi1986}. However, these techniques are only available to highly restricted classes of programs and inputs, and have been used in safety-critical applications such as train scheduling. But the approach of considering restricted classes of programs and inputs cannot be useful to the containment of superintelligence. Superintelligent machines, those Bostrom is interested in, are written in Turing-complete programming languages, are equipped with powerful sensors, and have the state of the world as their input. This seems unavoidable if we are to program machines to help us with the hardest problems facing society, such as epidemics, poverty, and climate change. These problems forbid the limitations imposed by available formal verification techniques, rendering those techniques unusable at this grand scale.

\begin{figure*}[ht]
\centering
\includegraphics[width=0.8\linewidth]{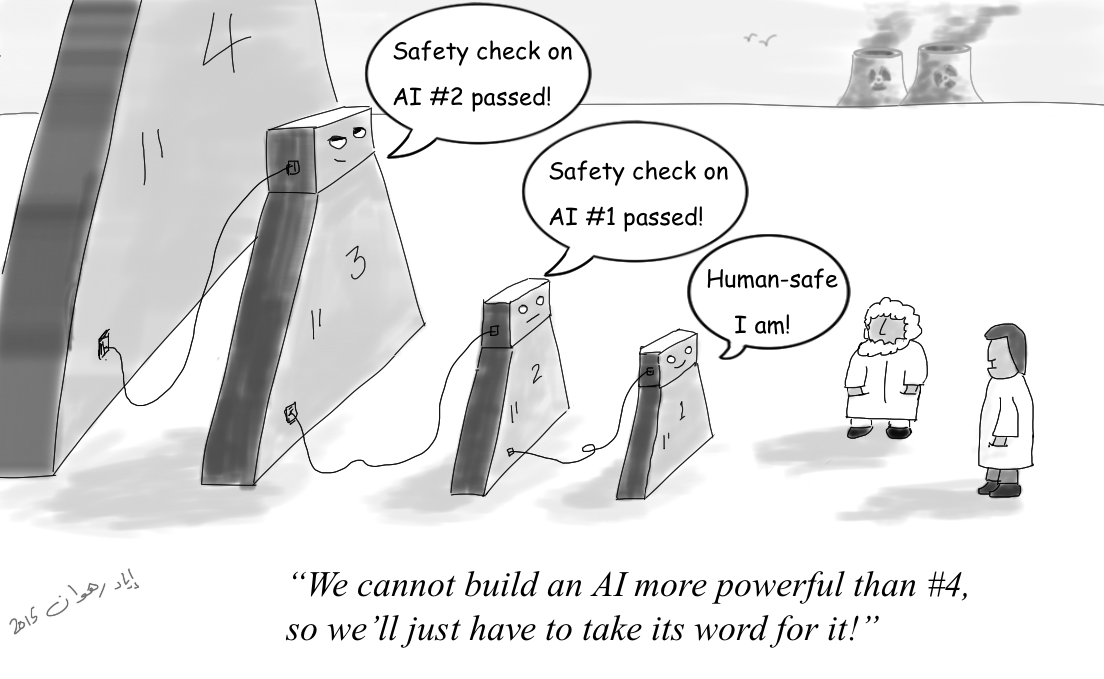}
\caption{Computational complexity barriers to controlling superintelligence.}
\label{fig:halting}
\end{figure*}

\section*{Discussion}

Today, we run billions of computer programs on globally connected machines, without any formal guarantee of their absolute safety. We have no way of \emph{proving} that when we launch an application on our smart phones, we would not trigger a chain reaction that leads to transmission of missile launch codes that start a nuclear war. Indeed, in 1965 Arthur C. Clarke wrote a short story (Dial F from Frankenstein) warning us that, as soon as all the computers on the Earth were connected via telephone, they would take command of our society. Yet, today, we still use our smart phones everyday, and nothing has happened. That is, despite the general unsolvability of the program-prediction problem, we are confident, for all practical purposes, that we are not in one of the troublesome cases. And more recently, a case has been made for an emerging role of \emph{oversight programs} that will monitor, audit, and hold operational AI programs accountable \cite{etzioni2016ai}.

However, whether the same `practical safety' can be assumed in the case of superintelligence is not obvious. The ability of modern computers to adapt using sophisticated machine learning algorithms makes it even more difficult to make assumptions about the eventual behavior of a superintelligent AI. While computability theory cannot answer this question, it tells us that there are fundamental, mathematical limits to our ability to use one AI to guarantee a null catastrophic risk of another AI (see Figure \ref{fig:halting}).

In closing, it may be appropriate to revisit Norbert Wiener, the founder of the field of \emph{Cybernetics}\cite{Wiener1948-1961}, who compared the literalism of magic to the behavior of computers : 
\begin{quotation}
\emph{More terrible than either of these tales is the fable of the monkey's paw\cite{Jacobs1902}, written by W. W. Jacobs, an English writer of the beginning of the [20th] century. A retired English working-man is sitting at his table with his wife and a friend, a returned British sergeant-major from India. The sergeant-major shows his hosts an amulet in the form of a dried, wizened monkey's paw... [which has] the power of granting three wishes to each of three people... The last [wish of the first owner] was for death... His friend... wishes to test its powers. His first [wish]  is for 200 pounds. Shortly thereafter there is a knock at the door, and an official of the company by which his son is employed enters the room. The father learns that his son has been killed in the machinery, but that the company... wishes to pay the father the sum of 200 pounds... The grief-stricken father makes his second wish -that his son may return- and when there is another knock at the door... something appears... the ghost of the son. The final wish is that the ghost should go away. In these stories the point is that the agencies of magic are literal-minded... The new agencies of the learning machine are also literal-minded. If we program a machine... and ask for victory and do not know what we mean by it, we shall find the ghost knocking at our door.}
\end{quotation}

\section*{Acknowledgments}

We are grateful to Scott Aaronson his insightful comments that helped us contextualize this work. We especially thank our colleague Roberto Moriyon who provided deep insight and expertise that greatly assisted the research. This research was partially supported by the Australian Government as represented by the Department of Broadband, Communications and the Digital Economy and the Australian Research Council through the ICT Centre of Excellence program, by the Spanish MINECO grant TEC2014- 55713-R, by the Regional Government of Madrid (CM) grant Cloud4BigData (S2013/ICE-2894, co-funded by FSE \& FEDER), and by the NSF of China grant 61520106005.

\bibliography{references}
\bibliographystyle{plain}

\end{document}